\documentclass[journal,twoside,web]{ieeecolor}

\usepackage{generic}
\usepackage{cite}
\usepackage{amsmath,amssymb,amsfonts}
\usepackage{graphicx}
\usepackage{textcomp}

\usepackage{lcsys}

\usepackage{booktabs}
\usepackage{multirow}
\usepackage{subcaption}
\usepackage{mathrsfs}
\usepackage{mathtools}
\usepackage{stmaryrd}
\usepackage{algorithm}
\usepackage[noend]{algpseudocode}
\usepackage{physics}

\algrenewcommand\algorithmicindent{1em}

\algnewcommand\algorithmicparameter{\textbf{Parameter:}}
\algnewcommand\Parameter{\item[\algorithmicparameter]}
\algnewcommand\algorithmiccondition{\textbf{Condition:}}
\algnewcommand\Condition{\item[\algorithmiccondition]}
\makeatletter
\renewcommand{\ALG@name}{Protocol}
\makeatother
\algrenewcommand\algorithmicdo{}

\newcommand{\asmref}[1]{Assumption~\ref{#1}}

\newcommand{\lemref}[1]{Lemma~\ref{#1}}
\newcommand{\thmref}[1]{Theorem~\ref{#1}}

\newcommand{\secref}[1]{Section~\ref{#1}}

\newcommand{\figref}[1]{Fig.~\ref{#1}}
\newcommand{\tabref}[1]{Table~\ref{#1}}

\newcommand{\Z}{\mathbb{Z}}
\newcommand{\Q}{\mathbb{Q}}
\newcommand{\R}{\mathbb{R}}
\newcommand{\ZZ}[1]{\Z_{\langle #1 \rangle}}
\newcommand{\QQ}[2]{\Q_{\langle #1, #2 \rangle}}
\newcommand{\cind}{\overset{c}{\equiv}}
\newcommand{\sind}{\overset{s}{\equiv}}
\newcommand{\pp}{\mathsf{pp}}
\newcommand{\crs}{\mathsf{crs}}
\newcommand{\Share}{\mathsf{Share}}
\newcommand{\Reconst}{\mathsf{Reconst}}
\newcommand{\Mult}{\mathsf{Mult}}
\newcommand{\Setup}{\mathsf{Setup}}
\newcommand{\Offline}{\mathsf{Offline}}
\newcommand{\Online}{\mathsf{Online}}
\newcommand{\View}{\mathsf{View}}
\newcommand{\Sim}{\mathsf{Sim}}

\DeclareMathOperator{\sample}{\gets_{\$}}

\DeclarePairedDelimiter{\floor}{\lfloor}{\rfloor}
\DeclarePairedDelimiter{\share}{\llbracket}{\rrbracket}

\newtheorem{theorem}{Theorem}[section]
\newtheorem{lemma}[theorem]{Lemma}

\newtheorem{definition}[theorem]{Definition}
\newtheorem{remark}[theorem]{Remark}
\newtheorem{assumption}[theorem]{Assumption}

\begin{document}

\def\BibTeX{{\rm B\kern-.05em{\sc i\kern-.025em b}\kern-.08em
    T\kern-.1667em\lower.7ex\hbox{E}\kern-.125emX}}
\markboth{\journalname, VOL. XX, NO. XX, XXXX 2017}
{Teranishi: Secure Two-Party Matrix Multiplication from Lattices and Its Application to Encrypted Control}

\title{Secure Two-Party Matrix Multiplication from Lattices and Its Application to Encrypted Control}

\author{
Kaoru Teranishi
\thanks{Department of Information and Physical Sciences, Graduate School of Information Science and Technology, The University of Osaka, 1-5 Yamadaoka, Suita, Osaka 565-0871, Japan; \texttt{k-teranishi@ist.osaka-u.ac.jp}}
}

\thispagestyle{empty}
\hspace{-4.5mm}
\fbox{
\begin{minipage}{\textwidth-5mm}\scriptsize
© 20XX IEEE.
Personal use of this material is permitted.
Permission from IEEE must be obtained for all other uses, in any current or future media, including reprinting/republishing this material for advertising or promotional purposes, creating new collective works, for resale or redistribution to servers or lists, or reuse of any copyrighted component of this work in other works.
\end{minipage}
}
\newpage
\setcounter{page}{0}

\maketitle
\thispagestyle{empty}

\begin{abstract}
In this study, we propose a two-party computation protocol for approximate matrix multiplication of fixed-point numbers.
The proposed protocol is provably secure under standard lattice-based cryptographic assumptions and enables matrix multiplication at a desired approximation level within a single round of communication.
We demonstrate the feasibility of the protocol by applying it to the secure implementation of a linear control law.
Our evaluation reveals that the client achieves lower online computational complexity compared to the original controller computation, while ensuring the privacy of controller inputs, outputs, and parameters.
Furthermore, a numerical example confirms that the proposed method maintains sufficient precision of control inputs even in the presence of approximation and quantization errors.
\end{abstract}

\begin{IEEEkeywords}
Cyber-physical systems, encrypted control, multi-party computation, privacy, security
\end{IEEEkeywords}

\section{Introduction}\label{sec:introduction}

\IEEEPARstart{E}{ncrypted} control is a paradigm for ensuring the security and privacy of cyber-physical systems by leveraging cryptographic primitives~\cite{Kogiso2015-go,Darup2021-qq}.
In this framework, control algorithms are executed directly on encrypted data.
This enables a client owning a physical plant to outsource controller computation to untrusted computing resources without disclosing sensitive information, such as sensor measurements, control actions, and controller parameters.

A primary architecture for encrypted control is the single-server setting using homomorphic encryption~\cite{Darup2021-qq}.
Despite significant theoretical advances, in such a setting, the computational burden on the client often exceeds that of the original unencrypted computation, even when implementing simple state-feedback controllers.
This issue stems from the intrinsic computational overhead of the encryption and decryption processes in homomorphic encryption.

To overcome the limitation of the single-server setting, several studies have explored encrypted control in multi-party architectures.
These frameworks usually employ secret sharing to protect sensitive information.
For example, linear control laws were initially implemented in a two-party setting using secret sharing~\cite{Darup2019-ou}.
This approach was subsequently extended to polynomial control laws by integrating homomorphic encryption~\cite{Darup2020-cm}.
The polynomial control was also realized using only secret sharing in settings with more than three parties~\cite{Schlor2021-am}.
Recent advancements have also enabled the implementation of neural-network-based controllers~\cite{Tjell2021-ar} and dynamic controllers~\cite{Teranishi2025-wv} in two-party settings.

A fundamental challenge in encrypted control using secret sharing is the secure evaluation of multiplications between two secret-shared values.
The initial implementation bypassed this issue by assuming that one of the values was public~\cite{Darup2019-ou}.
Furthermore, a naive interactive protocol leveraging homomorphic encryption typically requires multiple rounds of communication between the computing parties~\cite{Darup2020-cm}.
While replicated secret sharing~\cite{Araki2016-pf}, as employed in~\cite{Schlor2021-am}, offers an efficient alternative, it necessitates at least three non-colluding parties, thereby increasing the architectural complexity and deployment costs.
Another common strategy is the use of Beaver triples~\cite{Beaver1991-kr}, as seen in~\cite{Tjell2021-ar,Teranishi2025-wv}.
However, their online generation again requires multi-round communication.
Even if these triples are generated and provided by the client to the computing parties~\cite{Teranishi2025-wv}, this approach inherently increases the computational complexity and communication overhead of the client.
Consequently, simultaneously achieving low client-side complexity in a two-party setting with a single-round communication remains a significant challenge.

To address these challenges, this paper constructs a one-round two-party computation protocol for approximate matrix multiplication. 
Our approach integrates additive secret sharing with lattice-based cryptographic primitives, specifically encryption and commitment schemes. 
By exploiting the linearity of these lattice-based primitives, the proposed protocol enables the computing parties to securely evaluate approximate matrix multiplication within a single round of communication. 
This approach eliminates the need for multiple interactions between the parties. 
Moreover, the framework ensures that the online tasks of the client are restricted to the generation and reconstruction of shares, resulting in an online computational complexity lower than that of the original controller computation.

The main contributions of this study are as follows.
\begin{itemize}
    \item
    We propose a one-round two-party computation protocol for the approximate matrix multiplication of fixed-point numbers.
    The security of this protocol is guaranteed under the Learning With Errors assumption, which is a standard assumption in lattice-based cryptography.

    \item
    We provide a theoretical analysis of the approximation errors inherent in the proposed protocol.
    Sufficient conditions for the precision of fixed-point numbers are derived to guarantee that the approximation error remains below a specified level.

    \item
    By applying the protocol to the implementation of linear control laws, we demonstrate that the proposed framework simultaneously achieves low online computational complexity for the client, one-round communication between the two parties, and the concealment of both signals and feedback gains.
\end{itemize}

\section{Preliminaries}\label{sec:preliminaries}

\subsection{Notations}

Let $\Z$ and $\R$ denote the sets of integers and real numbers, respectively.
For an integer $q \ge 2$, define $\Z_q \coloneqq \Z \cap [-q/2, q/2)$.
For any $x \in \Z$, the reduction of $x$ modulo $q$ is given by $x \bmod q \coloneqq x - \floor{ \frac{x + q / 2}{q} } q$, where $\floor{ \cdot }$ denotes the floor function.
This modular reduction is applied for each element of vectors and matrices.
The maximum and Euclidean norms of a vector or matrix are denoted by $\norm{ \cdot }_{\max}$ and $\norm{ \cdot }$, respectively.
The notation $x \sample X$ indicates that $x$ is sampled either uniformly at random from a finite set $X$ or according to a specified distribution $X$.
Throughout this paper, $\lambda$ denotes the security parameter.
Two distribution ensembles $X$ and $Y$ are computationally indistinguishable, denoted by $X \cind Y$, if no probabilistic polynomial-time algorithm can distinguish them with a non-negligible advantage in $\lambda$.
We write $X \sind Y$ if their statistical distance is negligible in $\lambda$.

\subsection{Secret sharing}

In this study, we employ a two-party secret sharing scheme defined over a matrix message space.

\begin{definition}[Secret sharing]\label{def:ss}
    A secret sharing scheme consists of polynomial-time algorithms $\Share$ and $\Reconst$:
    \begin{itemize}
        \item $\share{X} \gets \Share(X)$:
        The share generation algorithm takes a matrix $X$ over $\Z_q$ as input and outputs shares $\share{X} = (\share{X}_0, \share{X}_1) = (R, X - R \bmod q)$, where $R$ is a random matrix over $\Z_q$ of appropriate dimensions.
        
        \item $X \gets \Reconst(\share{X})$:
        The reconstruction algorithm takes shares $\share{X} = (\share{X}_0, \share{X}_1)$ as input and outputs $X = \share{X}_0 + \share{X}_1 \bmod q$.
    \end{itemize}
\end{definition}

The secret sharing scheme satisfies $\Reconst(\Share(X)) = X$ for any matrix $X$ over $\Z_q$.
Moreover, each share $\share{X}_i$ held by party $P_i$ is uniformly random and thus leaks no information about $X$ to that party.
We respectively define the addition/subtraction and constant multiplication of shares by $\share{X} \pm \share{Y} \coloneqq (\share{X}_0 \pm \share{Y}_0 \bmod q, \share{X}_1 \pm \share{Y}_1 \bmod q)$ and $X \share{Y} \coloneqq (X \share{Y}_0 \bmod q, X \share{Y}_1 \bmod q)$, where $X$ and $Y$ are of compatible dimensions.
It then holds that
\begin{align}
    \share{X} \pm \share{Y} &= \share{ X \pm Y \bmod q }, \label{eq:ss-add} \\
    X \share{Y} &= \share{ X Y \bmod q }. \label{eq:ss-mult}
\end{align}
We also define $\share{X} Y$ as similar to $X \share{Y}$, and the transpose of shares as $\share{X}^\top \coloneqq (\share{X}_0^\top, \share{X}_1^\top)$.
These operations reveal no information about the secrets $X$ and $Y$ since they can be performed locally by each party without any interaction.

\section{Lattice Problems}\label{sec:lattice-problems}

Learning With Errors (LWE)~\cite{Regev2009-ys} and Short Integer Solution (SIS)~\cite{Ajtai1996-hm} are standard cryptographic problems defined over lattices.
These problems serve as the foundation for lattice-based cryptography and are applied to a wide range of applications, including encryption and commitment schemes.
This section provides a brief overview of the definitions and applications of the LWE and SIS problems.

\subsection{Learning with errors}

We begin by defining the LWE problem, which requires distinguishing a structured distribution over a lattice from a uniform one.

\begin{definition}[LWE]
    Let $n$, $m$, and $q$ be positive integers, and let $\chi^d$ be a distribution defined over $\Z^d$.
    The (decision) LWE problem is to distinguish between the two distributions $(A, A^\top s + e \bmod q)$ and $(A, w)$, where $A \sample \Z_q^{n \times m}$, $s \sample \chi^n$, $e \sample \chi^m$, and $w \sample \Z_q^m$.
    The LWE assumption states that these two ensembles are computationally indistinguishable, namely $(A, A^\top s + e \bmod q) \cind (A, w)$.
\end{definition}

In our framework, we sample the secret and error vectors from a discrete Gaussian distribution.
A discrete Gaussian distribution $\chi_\sigma^d$ over $\Z^d$ with a width parameter $\sigma > 0$ is defined by $\Pr[ X = x ] = \rho_\sigma(x) / \sum_{y \in \Z^d} \rho_\sigma(y)$ for each $x \in \Z^d$, where $\rho_\sigma(x) = {\exp}( -\pi \norm{x}^2 / \sigma^2 )$.
In lattice-based cryptography, these distributions are typically treated as bounded, which is formalized as follows.

\begin{assumption}\label{asm:bounded-dist}
    Let $B > 0$.
    For any sample $[x_1 \ \cdots \ x_d]^\top \allowbreak \sample \chi_\sigma^d$, $\abs{x_i} < B$ holds except with negligible probability.
\end{assumption}

For the width $\sigma = 3.2$ commonly adopted in the literature~\cite{Bossuat2024-kz}, the bound $B = 10 \sigma$ is sufficient to satisfy the negligible probability requirement in practice.
Given $\chi^n = \chi_\sigma^n$ and $\chi^m = \chi_\sigma^m$, the hardness of the LWE problem primarily depends on the dimension $n$, the modulus $q$, and the width parameter $\sigma$.
In general, the problem becomes more difficult as $n$ or the ratio $\sigma / q$ increases.
Because the number of samples $m$ provided to a challenger does not affect the underlying hardness, we refer to $(n, q, \sigma)$ as the LWE parameters.

The LWE problem is of significant interest because it provides a template for encryption schemes~\cite{Regev2009-ys}.
For instance, a message $\mu \in \Z_q^m$ can be encrypted using a secret key $s$ as
\begin{equation}
    c = A^\top s + \mu + e \bmod q.
    \label{eq:lwe-enc}
\end{equation}
Under the LWE assumption, a polynomial-time adversary learns nothing about $\mu$ from $c$.
Furthermore, decryption is performed via $[A^\top \ c] [-s^\top \ 1]^\top = \mu + e \bmod q \approx \mu$, which allows for the approximate recovery of the original message.

\subsection{Short integer solution}

We next introduce the SIS problem, which requires finding a short nonzero vector satisfying a linear equation over a lattice, given a random coefficient matrix and a random bias.

\begin{definition}[SIS]
    Let $n$, $t$, and $q$ be positive integers, and let $\beta > 0$ be a real number.
    The (inhomogeneous) SIS problem is to find a nonzero vector $r \in \Z^t$ such that $h = A r \bmod q$ with $\norm{r} \le \beta$ for some norm $\norm{\cdot}$, given $A \sample \Z_q^{n \times t}$ and $h \sample \Z_q^n$.
    The SIS assumption states that no polynomial-time algorithm can find such a short vector except with negligible probability.
\end{definition}

The hardness of the SIS problem depends on the dimensions $n$ and $t$, the modulus $q$, and the norm bound $\beta$.
For our purposes, $n$ and $q$ are chosen to be consistent with the LWE parameters, while $\beta$ is determined by protocol construction.
Consequently, we refer to $t$ as the SIS parameter.

The SIS problem can be employed to construct commitment schemes satisfying the hiding and binding properties.
The former ensures that the commitment reveals no information about the underlying message, while the latter prevents the sender from later opening the commitment to a different message.
Specifically, a commitment to a message $\mu \in \Z_q^m$ is computed as
\begin{equation}
    h = A \mu + B r \bmod q,
    \label{eq:sis-cmt}
\end{equation}
where $A \sample \Z_q^{n \times m}$, $B \sample \Z_q^{n \times t}$, and $r \sample \Z_3^t$~\cite{Ajtai1996-hm}.
By the Leftover Hash Lemma~\cite{Impagliazzo1989-fn}, $(B, B r \bmod q) \sind (B, w)$ holds for $w \sample \Z_q^n$ if $t \log_2 3 \ge n \log_2 q + 2 \lambda$, which guarantees the hiding property.
Moreover, the SIS assumption implies that no polynomial-time adversary can find a distinct pair $(\mu', r')$ such that $h = A \mu' + B r' \bmod q = [A \ B] [(\mu')^\top \ (r')^\top]^\top \bmod q$ with $\mu' \ne \mu$, thereby satisfying the binding property.

\section{Two-Party Matrix Multiplication}\label{sec:protocol}

This section proposes a two-party matrix multiplication protocol.
The protocol is built based on an approximate inner-product computation from the LWE and SIS problems.
We first revisit this underlying computation and then present the detailed construction of the proposed protocol.
Furthermore, we derive the conditions for the bit lengths of input data to guarantee the desired precision of the results.
Finally, we provide a security analysis of the protocol.

\subsection{Approximate inner product from LWE and SIS}

The approximate inner-product computation using LWE ciphertexts and SIS commitments is introduced in~\cite{Abram2024-me,Boyle2025-hx}.
Let $(n ,q, \sigma)$ and $t$ be the LWE and SIS parameters, respectively.
Suppose we have LWE ciphertexts \eqref{eq:lwe-enc} of a vector $x \in \Z_q^m$ and the zero vector, and an SIS commitment \eqref{eq:sis-cmt} to a vector $y \in \Z_q^m$, that is,
\begin{align*}
    c &= A^\top s + x + e \bmod q, \quad c' = B^\top s + e' \bmod q, \\
    h &= A y + B r \bmod q,
\end{align*}
where $A \sample \Z_q^{n \times m}$, $B \sample \Z_q^{n \times t}$, $s \sample \chi_\sigma^n$, $e \sample \chi_\sigma^m$, $e' \sample \allowbreak \chi_\sigma^t$, and $r \sample \Z_3^t$.
We now consider the inner products $c^\top y = x^\top y + s^\top A y + e^\top y$, $(c')^\top r = s^\top B r + (e')^\top r$, and $s^\top h = s^\top A y + s^\top B r$.
By subtracting the third from the sum of first two, we can observe that the terms $s^\top A y$ and $s^\top B r$ are eliminated.
This yields an approximate inner product
\begin{align}
    c^\top y + (c')^\top r - s^\top h \bmod q
    &= x^\top y + (\text{noise}) \bmod q, \nonumber \\
    &\approx x^\top y \bmod q, \label{eq:approx-inner}
\end{align}
where the term $e^\top y + (e')^\top r$ constitutes the noise.

The previous studies~\cite{Abram2024-me,Boyle2025-hx} have utilized the property \eqref{eq:approx-inner} to compute shares of an inner product in the scenarios where the two parties respectively hold $x$ and $y$.
By contrast, our focus is on computing the inner product while concealing $x$ and $y$ from the parties, ensuring that only the client knows these values.
The core observation to achieve this is that the left-hand side of \eqref{eq:approx-inner} is linear in $y$, $r$, and $s$, which are the secret components of the LWE ciphertext and the SIS commitment.
Therefore, by leveraging the properties \eqref{eq:ss-add} and \eqref{eq:ss-mult} of the secret sharing scheme, the approximate inner product can be computed over the shares of the secrets.
This approach suggests a method for outsourcing the inner-product computation to two untrusted parties.
Building on this idea, the next section constructs a multiplication protocol tailored for matrix messages of fixed-point numbers.

\subsection{Proposed protocol}

The approximate inner product \eqref{eq:approx-inner} can be extended to the multiplication of $d_1 \times d_2$ and $d_2 \times d_3$ matrices.
Let $x_i, y_j \in \Z_q^{d_2}$ for $i = 1, \dots, d_1$ and $j = 1, \dots, d_3$.
For each vector, we compute the LWE ciphertexts \eqref{eq:lwe-enc} and the SIS commitments~\eqref{eq:sis-cmt} as $c_i = A^\top s_i + x_i + e_i \bmod q$, $c'_i = B^\top s_i + e'_i \bmod q$, and $h_j = A y_j + B r_j \bmod q$, where $A \sample \Z_q^{n \times d_2}$, $B \sample \Z_q^{n \times t}$, $s_i \sample \chi_\sigma^n$, $e_i \sample \chi_\sigma^{d_2}$, $e'_i \sample \chi_\sigma^t$, and $r_j \sample \Z_3^t$.
By arranging these vectors as columns, we obtain the following matrix representations,
\begin{align*}
    C &= A^\top S + X^\top + E \bmod q, \quad C' = B^\top S + E' \bmod q, \\
    H &= A Y + B R \bmod q,
\end{align*}
where the uppercase letters denote the horizontal concatenation of their lowercase counterparts, e.g., $S = [s_1 \ \cdots \ s_{d_1}]$.
Note that $X^\top = [x_1 \ \cdots \ x_{d_1}] \in \Z_q^{d_2 \times d_1}$ appears in a transposed form within $C$ to ensure dimensional compatibility.
As a result, these matrix forms satisfy the approximate relation
\begin{equation}
    C^\top Y + (C')^\top R - S^\top H \bmod q \approx X Y \bmod q
    \label{eq:approx-matmult}
\end{equation}
analogous to the vector case in \eqref{eq:approx-inner}.

Although the approximate matrix multiplication \eqref{eq:approx-matmult} is defined over $\Z_q$, practical applications, including control systems, require computations to be performed over real-valued matrices.
To bridge this gap, we introduce a fixed-point number representation.
Let $\QQ{k}{\ell} \coloneqq \{ 2^{-\ell} z \mid z \in \ZZ{k} \}$ denote the set of $k$-bit fixed-point numbers with an $\ell$-bit fractional part, where $\ZZ{k} \coloneqq \{ -2^{k - 1}, \dots, 2^{k - 1} - 1 \}$ is the set of $k$-bit integers.
A matrix $X \in \QQ{k}{\ell}^{d_1 \times d_2}$ is encoded into an integer matrix $\bar{X} \in \ZZ{k}^{d_1 \times d_2}$ by  scaling with $2^\ell$, i.e., $\bar{X} = 2^\ell X$.
By replacing $X$ and $Y$ in \eqref{eq:approx-matmult} with their encoded values $\bar{X} = 2^\ell X$ and $\bar{Y} = 2^\ell Y$, we obtain an approximate matrix multiplication framework over fixed-point numbers.

Consequently, we propose a two-party matrix multiplication protocol in \figref{fig:protocol}.
The proposed $\Mult$ protocol consists of three subroutines: $\Setup$, $\Offline$, and $\Online$.
The $\Setup$ and $\Offline$ procedures are performed once during an offline phase, whereas the $\mathsf{Online}$ procedure can be executed repeatedly for various inputs during an online phase.

In the offline phase, a client invokes $\Mult.\Setup$ with public parameters $\pp$ to generate random matrices $A \sample \Z_q^{n \times d_2}$ and $B \sample \Z_q^{n \times t}$, and publishes a common reference string $\crs = (\pp, A, B)$ to the parties.
The client then invokes $\Mult.\Offline$ with $\crs$ and the input matrix $X \in \QQ{k}{\ell}^{d_1 \times d_2}$.
During this procedure, the client computes the LWE ciphertexts $C = A^\top S + \bar{X}^\top + E \bmod q$ and $C' = B^\top S + E' \bmod q$, where $\bar{X} = 2^\ell X \in \ZZ{k}^{d_1 \times d_2}$, $S \sample \chi_\sigma^{n \times d_1}$, $E \sample \chi_\sigma^{d_2 \times d_1}$, and $E' \sample \chi_\sigma^{t \times d_1}$.
Finally, it also generates the shares $\share{S}$ of the secret matrix $S$ and distributes $(C, C', \share{S}_i)$ to the party $P_i$.

In the online phase, the client and the two parties invoke $\Mult.\Online$.
The client first encodes the input matrix $Y \in \QQ{k}{\ell}^{d_2 \times d_3}$ into $\bar{Y} = 2^\ell Y \in \ZZ{k}^{d_2 \times d_3}$.
It then generates the shares $\share{\bar{Y}}$ and sends $\share{\bar{Y}}_i$ to the party $P_i$.
Each party $P_i$ locally samples $\share{R}_i \sample \Z_3^{t \times d_3}$ to constitute the random shares $\share{R} = (\share{R}_0, \share{R}_1)$.
The parties then locally compute $\share{H} = A \share{\bar{Y}} + B \share{R}$ using the properties \eqref{eq:ss-add} and \eqref{eq:ss-mult}.
Subsequently, they open the shares to reconstruct $H = \Reconst(\share{H})$.
Using the LWE ciphertexts $C, C'$ and the SIS commitment $H$, the parties compute $\share{\bar{Z}} = C^\top \share{\bar{Y}} + (C')^\top \share{R} - \share{S}^\top H$ locally and returns the resulting shares to the client.
Upon receiving these shares, the client computes $Z = 2^{-2\ell} \Reconst(\share{\bar{Z}})$.

\begin{figure}[t]
    \centering
    
    \setlength{\fboxsep}{1eM}
    \setlength{\fboxrule}{1pt}
    
    \fbox{\begin{minipage}{\linewidth-4eM}
    \leftline{Public parameters: $\pp = (k, \ell, d_1, d_2, d_3, n, q, \sigma, t)$}

    \medskip
    
    \leftline{$\Mult.\Setup(\pp)$:}
    \begin{algorithmic}[1]
        \State $A \sample \Z_q^{n \times d_2}$, $B \sample \Z_q^{n \times t}$
        \State Output $\crs \gets (\pp, A, B)$
    \end{algorithmic}

    \medskip
    
    \leftline{$\Mult.\Offline(\crs, X)$:}
    \begin{algorithmic}[1]
        \State $\bar{X} \gets 2^\ell X$
        \State $S \sample \chi_\sigma^{n \times d_1}$, $E \sample \chi_\sigma^{d_2 \times d_1}$, $E' \sample \chi_\sigma^{t \times d_1}$
        \State $C \gets A^\top S + \bar{X}^\top + E \bmod q$
        \State $C' \gets B^\top S + E' \bmod q$
        \State $\share{S} \gets \Share(S)$
        \State Output $(C, C', \share{S})$
    \end{algorithmic}

    \medskip

    \leftline{$\Mult.\Online(\crs, C, C', \share{S}, Y)$:}
    \begin{algorithmic}[1]
        \State $\bar{Y} \gets 2^\ell Y$
        \State $\share{\bar{Y}} \gets \Share(\bar{Y})$
        \State $\share{R}_0 \sample \Z_3^{t \times d_3}$, $\share{R}_1 \sample \Z_3^{t \times d_3}$
        \State $\share{R} \gets (\share{R}_0, \share{R}_1)$
        \State $\share{H} \gets A \share{\bar{Y}} + B \share{R}$
        \State $H \gets \Reconst(\share{H})$
        \State $\share{\bar{Z}} \gets C^\top \share{\bar{Y}} + (C')^\top \share{R} - \share{S}^\top H$
        \State Output $Z \gets 2^{-2 \ell} \Reconst(\share{\bar{Z}})$
    \end{algorithmic}
    \end{minipage}}

    \caption{Two-party matrix multiplication protocol $\Mult$ consisting of three subroutines: $\Setup$, $\Offline$, and $\Online$.}
    \label{fig:protocol}
    \vspace{-1.5\baselineskip}
\end{figure}

We now establish the approximate correctness of the proposed protocol.
The correctness relies on the noise term inherent in the computation of \eqref{eq:approx-matmult}.
To ensure that the protocol output $Z$ accurately approximates the desired matrix product $XY$, we need to show that the reconstructed value $\bar{Z}$ does not suffer from modular wrap-around (i.e., overflow) and that the magnitude of the noise term remains sufficiently small.
The following lemma guarantees that such overflow is avoided when the bit length of the inputs is bounded according to the cryptographic parameters and matrix dimensions.

\begin{lemma}\label{lem:k-bound}
    Consider the $\Mult$ protocol in \figref{fig:protocol}.
    Under \asmref{asm:bounded-dist} with $\sigma=3.2$ and $B = 10 \sigma$, if
    \begin{equation}
        6 \le k < \frac{1}{2} \log_2 \frac{ q - 128 t }{ d_2 },
        \label{eq:k-bound}
    \end{equation}
    $\bar{Z} = \bar{X} \bar{Y} + E \bar{Y} + E' R$ holds for all $X \in \QQ{k}{\ell}^{d_1 \times d_2}$ and $Y \in \QQ{k}{\ell}^{d_2 \times d_3}$, where $\bar{Z} = \Reconst(\share{\bar{Z}})$ and $R = \Reconst(\share{R})$.
\end{lemma}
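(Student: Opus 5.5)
The plan is to first identify $\bar{Z}$ as a matrix expression reduced modulo $q$, and then show that for every admissible input each entry of that expression already lies in $\Z_q$. Once both are in place, the reduction modulo $q$ does nothing and the identity holds over $\Z$.

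\textbf{Step 1 (algebraic identity modulo $q$).} By \eqref{eq:ss-add} and \eqref{eq:ss-mult},
\begin{equation*}
\bar{Z} = C^\top \bar{Y} + (C')^\top R - S^\top H \bmod q, \qquad H = A\bar{Y} + BR \bmod q .
\end{equation*}
Substituting $C^\top = S^\top A + \bar{X} + E^\top$ and $(C')^\top = S^\top B + (E')^\top$ (both modulo $q$) cancels the terms $S^\top A \bar{Y}$ and $S^\top B R$. This leaves
\begin{equation*}
\bar{Z} = \bar{X}\bar{Y} + E^\top \bar{Y} + (E')^\top R \bmod q .
\end{equation*}
The statement's $E\bar{Y}$ and $E'R$ should be read with these transposes, as the dimensions require. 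Here $R = \share{R}_0 + \share{R}_1 \bmod q$ with $\share{R}_i \in \Z_3^{t\times d_3}$. So the integer sum has entries in $\{-2,\dots,2\}$, and this does not wrap for any $q>4$; hence $\|R\|_{\max}\le 2$ as an integer matrix.

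\textbf{Step 2 (entrywise bound).} Fix an entry. Each product below has $d_2$ or $t$ terms.
\begin{itemize}
\item Since $\bar{X},\bar{Y}\in\ZZ{k}$, each $|\bar{x}\bar{y}|\le 2^{2(k-1)}$, so $|(\bar{X}\bar{Y})_{ij}|\le d_2 2^{2k}/4$.
\item By \asmref{asm:bounded-dist} with $B=32$, every entry of $E$ and $E'$ is below $32$ in absolute value except with negligible probability. Hence $|(E^\top\bar{Y})_{ij}| \le 32\cdot 2^{k-1} d_2 = 16 d_2 2^k$.
\item Likewise $|((E')^\top R)_{ij}| \le 32\cdot 2\cdot t = 64t$.
\end{itemize}
The integer matrix $W=\bar{X}\bar{Y}+E^\top\bar{Y}+(E')^\top R$ therefore satisfies
\begin{equation*}
\|W\|_{\max} \le \tfrac14 d_2 2^{2k} + 16 d_2 2^k + 64 t .
\end{equation*}

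\textbf{Step 3 (using \eqref{eq:k-bound}).} This step, matching constants so the two sides of \eqref{eq:k-bound} each do their job, is where the care lies. The lower bound $k\ge 6$ gives $2^k\ge 64$, so $16\cdot 2^k \le 2^{2k}/4$. The middle term is then absorbed: $\|W\|_{\max} \le \tfrac12 d_2 2^{2k} + 64t$. The upper bound in \eqref{eq:k-bound} is equivalent to $d_2 2^{2k} < q-128t$, which gives $\tfrac12 d_2 2^{2k} + 64t < q/2$. Hence every entry of $W$ lies strictly in $(-q/2,q/2)\subset \Z_q$.

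By the definition of $\bmod q$ as the centered reduction onto $\Z_q$, we get $W \bmod q = W$. Combining with Step 1 yields $\bar{Z}=W$ over the integers, for all $X,Y$, except with the negligible probability excluded by \asmref{asm:bounded-dist}. The only other point to double-check is the strictness at $-q/2$, which is already covered because the inequality is strict.
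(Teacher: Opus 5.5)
Your proof is correct and follows essentially the same route as the paper. Like the paper, you reduce $\bar{Z}$ to $\bar{X}\bar{Y}+E^\top\bar{Y}+(E')^\top R \bmod q$ via the linearity of the shares, bound each entry by $2^{2(k-1)}d_2 + 2^{k-1}Bd_2 + 2Bt$, absorb the middle term using $k\ge 6$ (i.e., $2^{k-1}\ge B=2^5$), and conclude from the upper bound in \eqref{eq:k-bound} that no wrap-around occurs; your explicit notes that $R$ has integer entries in $\{-2,\dots,2\}$ and that the identity holds only outside the negligible-probability event of \asmref{asm:bounded-dist} are small refinements the paper leaves implicit.
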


\begin{proof}
    The properties \eqref{eq:ss-add} and \eqref{eq:ss-mult} of the secret sharing scheme imply that $\share{\bar{Z}} = \share{ C^\top \bar{Y} + (C')^\top R - S^\top H \bmod q } = \share{ \bar{X} \bar{Y} + E^\top \bar{Y} + (E')^\top R \bmod q }$.
    Let $\hat{Z} = \bar{X} \bar{Y} + E^\top \bar{Y} + (E')^\top R$.
    If $\hat{Z}$ belongs to $\Z_q^{d_1 \times d_3}$, its reduction modulo $q$ satisfies $\hat{Z} \bmod q = \hat{Z}$.
    In this case, the reconstruction implies $\bar{Z} = \Reconst(\share{\bar{Z}}) = \Reconst(\share{\hat{Z}}) = \hat{Z}$.
    
    The maximum norm of $\hat{Z}$ is bounded by $\norm\big{\hat{Z}}_{\max} \le \norm{ \bar{X} \bar{Y} }_{\max} + \norm{ E^\top \bar{Y} }_{\max} + \norm{ (E')^\top R }_{\max} \le 2^{ 2 (k - 1) } d_2 + 2^{k - 1} B d_2 + 2 B t$.
    For $k \ge 6$, we have $2^{ 2 (k - 1) } d_2 + 2^{k - 1} B d_2 + 2 B t \le 2^{2k - 1} d_2 + 2^6 t$, where $2^{k - 1} \ge B = 2^5$.
    Hence, the condition $\hat{Z} \in \Z_q^{d_1 \times d_3}$ holds if $2^{2k - 1} d_2 + 2^6 t < q / 2$.
    Multiplying by two and taking the logarithm, we obtain $2k + \log_2 d_2 < \log_2 (q - 128 t)$, which implies \eqref{eq:k-bound}.
\end{proof}

While \lemref{lem:k-bound} establishes an upper bound on the data size to prevent modular wrap-around, the following theorem provides a lower bound on the fractional bit length required to achieve a desired level of precision.

\begin{theorem}\label{thm:l-bound}
    Consider the $\Mult$ protocol in \figref{fig:protocol}.
    Suppose the condition \eqref{eq:k-bound} is satisfied.
    Under \asmref{asm:bounded-dist} with $\sigma=3.2$ and $B = 10 \sigma$, for every $\epsilon > 0$, if
    \begin{equation}
        \ell > \frac{1}{2} \qty[ k + 4 + \log_2 \frac{ d_2 + t }{ \epsilon } ],
        \label{eq:l-bound}
    \end{equation}
    $\norm{ XY - Z }_{\max} < \epsilon$ holds for all $X \in \QQ{k}{\ell}^{d_1 \times d_2}$ and $Y \in \QQ{k}{\ell}^{d_2 \times d_3}$.
\end{theorem}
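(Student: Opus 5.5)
Since \eqref{eq:k-bound} holds, \lemref{lem:k-bound} gives the exact integer identity $\bar{Z} = \bar{X}\bar{Y} + E^\top\bar{Y} + (E')^\top R$ with no modular wrap-around. (The proof of the lemma shows that the noise terms carry transposes.) Substituting $\bar{X} = 2^\ell X$, $\bar{Y} = 2^\ell Y$ and $Z = 2^{-2\ell}\bar{Z}$, I obtain
\begin{equation*}
    Z - XY = 2^{-2\ell}\qty( E^\top \bar{Y} + (E')^\top R ).
\end{equation*}
The fixed-point parts cancel exactly, so the whole error is the cryptographic noise scaled by $2^{-2\ell}$. It therefore suffices to bound the max norm of the noise matrix and then choose $\ell$ large enough.

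Next I bound each entry entrywise, as in the proof of \lemref{lem:k-bound}. Under \asmref{asm:bounded-dist}, every entry of $E$ and $E'$ has magnitude below $B = 10\sigma = 32 = 2^5$. Every entry of $\bar{Y} \in \ZZ{k}^{d_2\times d_3}$ has magnitude at most $2^{k-1}$. Every entry of $R = \share{R}_0 + \share{R}_1$ has magnitude at most $2$, since each share lies in $\Z_3 = \{-2,-1,0\}$ under the paper's convention. No reduction modulo $q$ is taken here, because \lemref{lem:k-bound} gives the integer identity. This yields
\begin{equation*}
    \norm{E^\top\bar{Y} + (E')^\top R}_{\max} < 2^{k-1}\cdot 2^5 d_2 + 2\cdot 2^5 t = 2^{k+4} d_2 + 2^6 t.
\end{equation*}
Since $k \ge 6$ by \eqref{eq:k-bound}, $2^6 \le 2^{k+4}$, so the bound is at most $2^{k+4}(d_2 + t)$.

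Finally, I combine the two steps:
\begin{equation*}
    \norm{XY - Z}_{\max} < 2^{-2\ell}\, 2^{k+4}(d_2+t).
\end{equation*}
This is below $\epsilon$ exactly when $2\ell > k + 4 + \log_2\frac{d_2+t}{\epsilon}$, which is \eqref{eq:l-bound}. The statement holds except with negligible probability, inherited from \asmref{asm:bounded-dist} as in the lemma.

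The main obstacle is the constant bookkeeping. The entry bound on $R$ depends on the convention for $\Z_3$ and on the fact that $R$ is the integer sum of the shares, not reduced mod $3$; the protocol only ever uses $R$ as this integer sum inside $H$ and $\bar{Z}$. The other delicate point is absorbing the $t$-term into $2^{k+4}$ using $k\ge 6$ so that the clean exponent $k+4$ comes out. I will recheck these two points first. If $\Z_3$ were centered as $\{-1,0,1\}$, the bound would only improve.
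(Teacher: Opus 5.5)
Your proposal is correct and follows the paper's proof essentially step for step: invoke Lemma~\ref{lem:k-bound} to get $Z = XY + 2^{-2\ell}(E^\top\bar{Y} + (E')^\top R)$, bound the noise entrywise by $2^{k-1}Bd_2 + 2Bt$ with $B = 2^5$, absorb this into $2^{k+4}(d_2+t)$, and take logarithms. One slip needs fixing: under the paper's convention $\Z_3 = \Z \cap [-3/2, 3/2) = \{-1,0,1\}$, and it is this that gives $|R_{ij}| \le 2$; shares in $\{-2,-1,0\}$ would allow magnitude $4$, though even then $2^7 t \le 2^{k+4} t$ for $k \ge 6$, so your conclusion is unaffected.
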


\begin{proof}
    Since the condition \eqref{eq:k-bound} is satisfied, \lemref{lem:k-bound} guarantees that $Z = XY + 2^{-2 \ell} \qty( E^\top \bar{Y} + (E')^\top R )$.
    Thus, the maximum norm is bounded by $\norm{ XY - Z }_{\max} \le 2^{-2 \ell} \norm{ E^\top \bar{Y} + (E')^\top R }_{\max} \le 2^{-2 \ell} \qty( 2^{k - 1} B d_2 + 2 B t  ) \le 2^{k - 2 \ell + 4} ( d_2 + t )$, where $2^{k - 1} \ge B$ and $B = 2^5$ are applied.
    This implies that $\norm{ XY - Z }_{\max} < \epsilon$ holds if $2^{k - 2 \ell + 4} ( d_2 + t ) < \epsilon$.
    Taking the logarithm of both sides, we obtain \eqref{eq:l-bound}.
    This completes the proof.
\end{proof}

From \eqref{eq:l-bound}, achieving a smaller $\epsilon$ requires a larger $\ell$, inherently increasing $k$.
Avoiding overflow via \eqref{eq:k-bound} necessitates a larger $q$.
To maintain the security level, $n$ must increase, requiring $t$ to also increase to satisfy \eqref{eq:t-bound} in \thmref{thm:security} of \secref{sec:security}.
As these enlarged parameters increase computational and communication complexities (see \tabref{tab:comparison}), there is a trade-off between precision and implementation costs.

\subsection{Security Analysis}\label{sec:security}

We next analyze the security of the proposed protocol via the simulation paradigm~\cite{Goldreich2009-ct}.
For this analysis, we assume the following adversary model.

\begin{assumption}\label{asm:adversary}
    The computing parties $P_0$ and $P_1$ are semi-honest and non-colluding.
\end{assumption}

Under this model, the computing parties follow the protocol but may individually attempt to infer private information.
This setting is practical when $P_0$ and $P_1$ are distinct and competing cloud providers contracted by the client.
However, this assumption excludes malicious adversaries who actively deviate from the protocol.
Addressing these threats remains an important future direction.

In cryptography, the information available to a party is referred to as its view, consisting of its inputs, internal randomness, and all received messages.
Within the simulation paradigm, security is established by constructing a simulator that generates a view indistinguishable from the real one.

\begin{theorem}\label{thm:security}
    Let $\View_i(\pp, X, Y)$ be the view of $P_i$ during the execution of the $\Mult$ protocol in \figref{fig:protocol}.
    Under the LWE assumption and \asmref{asm:adversary}, if
    \begin{equation}
        t \log_2 3 \ge n \log_2 q + 2 (\lambda + \log_2 d_3),
        \label{eq:t-bound}
    \end{equation}
    then for every $i \in \qty{0, 1}$, there exists a probabilistic polynomial-time simulator $\Sim_i$ such that $\View_i(\pp, X, Y) \allowbreak \cind \Sim_i(\pp)$ for all $X \in \QQ{k}{\ell}^{d_1 \times d_2}$ and $Y \in \QQ{k}{\ell}^{d_2 \times d_3}$.
\end{theorem}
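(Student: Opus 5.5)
By symmetry it suffices to treat $P_0$; the simulator $\Sim_1$ is built the same way. First we write out the real view. It is
\[
\View_0 = \bigl(\crs, C, C', \share{S}_0, \share{\bar{Y}}_0, \share{R}_0, \share{H}_1\bigr),
\]
where $\share{H}_1 = A\share{\bar{Y}}_1 + B\share{R}_1 \bmod q$ is the message received when $H$ is opened. The outgoing share $\share{\bar{Z}}_0$ is a deterministic function of these components, so it adds nothing. The simulator $\Sim_0(\pp)$ samples $A, B$ uniformly, $C \sample \Z_q^{d_2\times d_1}$, $C' \sample \Z_q^{t\times d_1}$, $\share{S}_0 \sample \Z_q^{n\times d_1}$, $\share{\bar{Y}}_0 \sample \Z_q^{d_2\times d_3}$, $\share{R}_0 \sample \Z_3^{t\times d_3}$ and $\share{H}_1 \sample \Z_q^{n\times d_3}$, and outputs these together with $\crs$.

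We then pass from the real view to the simulated one through a sequence of hybrids.

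\emph{Hybrid 1.} Both $\share{S}_0$ and $\share{\bar{Y}}_0$ are, by Definition~\ref{def:ss}, uniform matrices chosen independently of $S$ and $\bar{Y}$. Hence we may regard them as fresh uniform values, independent of everything else in the view, and the distribution does not change.

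\emph{Hybrid 2.} We replace $(C, C')$ with uniform matrices. Concatenating columnwise,
\[
\begin{bmatrix} C \\ C' \end{bmatrix} = [A \ B]^\top S + \begin{bmatrix} E \\ E' \end{bmatrix} + \begin{bmatrix} \bar{X}^\top \\ 0 \end{bmatrix} \bmod q .
\]
This is $d_1$ LWE instances with the public matrix $[A\ B] \in \Z_q^{n\times(d_2+t)}$, plus a fixed shift. The secret $S$ now appears nowhere else in the view, because $\share{S}_0$ was made independent in Hybrid 1. A standard hybrid argument over the $d_1$ columns, using the LWE assumption, makes these matrices computationally indistinguishable from uniform, and a uniform matrix plus a fixed shift is still uniform. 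The reduction needs $A$ and $B$, but it receives them as part of the LWE challenge, so it can still form $\crs$.

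\emph{Hybrid 3.} This is the step we expect to be the main obstacle. We must replace $\share{H}_1 = A\share{\bar{Y}}_1 + B\share{R}_1$ with a uniform matrix. Here $\share{\bar{Y}}_1$ is determined by $\bar{Y}$ and $\share{\bar{Y}}_0$, which are known in the view, so it acts as a fixed offset. The only source of hiding is $B\share{R}_1$, where $\share{R}_1 \sample \Z_3^{t\times d_3}$ is private to $P_1$ and never appears in $P_0$'s view. We use the Leftover Hash Lemma column by column: for each column $r$ of $\share{R}_1$,
\[
(B, Br \bmod q) \sind (B, w), \qquad w \sample \Z_q^n ,
\]
with statistical distance at most $2^{-(t\log_2 3 - n\log_2 q)/2}$. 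The same $B$ is reused across the $d_3$ columns, so we apply the lemma through a hybrid over the columns. The total distance is then at most
\[
d_3 \cdot 2^{-(t\log_2 3 - n\log_2 q)/2},
\]
which the condition \eqref{eq:t-bound} makes at most $2^{-\lambda}$; this is exactly where the $\log_2 d_3$ term in \eqref{eq:t-bound} is absorbed. Since the columns are then uniform and independent of the fixed offset, $\share{H}_1$ is statistically close to uniform. One point to check is ordering: $B$ also appears in $(C, C')$. That is acceptable because after Hybrid 2 these matrices are uniform and independent of $B$, and the lemma holds jointly with $B$ public.

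Finally, $\share{R}_0$ is sampled by $P_0$ itself and is independent of everything else, so it already has the simulated distribution. Chaining the hybrids gives $\View_0 \cind \Sim_0(\pp)$, and $\Sim_0$ runs in polynomial time.
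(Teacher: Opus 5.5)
Your proof is correct and follows the paper's own argument: the same all-uniform simulator, LWE applied column by column with public matrix $[A\ B]$ to replace $(C, C')$, and the Leftover Hash Lemma applied to the $d_3$ columns of $\share{R}_{1-i}$, with the $\log_2 d_3$ term absorbing the union bound. Your explicit hybrid ordering makes rigorous the paper's brief ``combining the above replacements'' step. One small slip: in Hybrid~1, $\share{\bar{Y}}_0$ is not yet independent of the rest of the view, because $\share{H}_1$ depends on it, but you handle that dependence correctly as an offset in Hybrid~3.
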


\begin{proof}
    The view of $P_i$ is given by $\View_i(\pp, X, Y) = (\crs, C, C', \share{S}_i, \share{\bar{Y}}_i, \share{R}_i, \share{H}_{1-i})$, where $\crs = (\pp, A, \allowbreak B)$.
    We construct a simulator $\Sim_i$ as follows.
    It takes $\pp$ as input, samples $\mathcal{A} \sample \Z_q^{n \times d_2}$, $\mathcal{B} \sample \Z_q^{n \times t}$, $\mathcal{C} \sample \Z_q^{d_2 \times d_1}$, $\mathcal{C}' \sample \Z_q^{t \times d_1}$, $\share{\mathcal{S}}_i \sample \Z_q^{n \times d_1}$, $\share{\bar{\mathcal{Y}}}_i \sample \Z_q^{d_2 \times d_3}$, $\share{\mathcal{R}}_i \sample \allowbreak \Z_3^{t \times d_3}$, and $\share{\mathcal{H}}_{1-i} \allowbreak \sample \allowbreak \Z_q^{n \times d_3}$, and outputs $((\pp, \mathcal{A}, \mathcal{B}), \mathcal{C}, \mathcal{C}', \allowbreak \share{\mathcal{S}}_i, \share{\bar{\mathcal{Y}}}_i, \share{\mathcal{R}}_i, \share{\mathcal{H}}_{1-i})$.
    We show that the simulator output is computationally indistinguishable from the real view.
    
    In $\Mult.\Setup$, $A \in \Z_q^{n \times d_2}$ and $B \in \Z_q^{n \times t}$ are sampled uniformly at random.
    Hence, the distributions of $(\pp, A, B)$ and $(\pp, \mathcal{A}, \mathcal{B})$ are identical.
    
    Next, consider $C = A^\top S + \bar{X}^\top + E \bmod q$ and $C' = B^\top S + E' \bmod q$.
    Let $\check{A} = [A \ B]$, $\check{E} = [E^\top \ (E')^\top]^\top$, and $\check{\mathcal{C}} = [\mathcal{C}^\top \ (\mathcal{C}')^\top]^\top$.
    For each $j = 1,\dots,d_1$, let $s_j$, $\check{e}_j$, and $\check{c}_j$ be the $j$th columns of $S$, $\check{E}$, and $\check{\mathcal{C}}$, respectively. 
    Each column of $\check{A}^\top S + \check{E} \bmod q$ is an LWE sample, and the LWE assumption implies $(\check{A}, \check{A}^\top s_j + \check{e}_j \bmod q) \cind (\check{A}, \check{c}_j)$.
    Adding the fixed message term to the LWE sample preserves its computational indistinguishability.
    Hence, applying this column by column, we obtain $(A, B, C, C') \cind (A, B, \mathcal{C}, \mathcal{C}')$.
    
    The shares $\share S_i$ and $\share{\bar{Y}}_i$ are uniformly random and independent of $S$ and $\bar{Y}$, respectively.
    Moreover, $\share{R}_i$ is sampled from $\Z_3^{t \times d_3}$ uniformly at random in $\Mult.\Online$.
    Therefore, the joint distributions of $(\share{S}_i, \share{\bar{Y}}_i, \share{R}_i)$ and $(\share{\mathcal{S}}_i, \share{\bar{\mathcal{Y}}}_i, \share{\mathcal{R}}_i)$ are identical.
    
    Finally, consider $\share{H}_{1-i} = A \share{\bar{Y}}_{1-i} + B \share{R}_{1-i} \allowbreak \bmod q$.
    For each $j = 1, \dots, d_3$, let $r_j$ and $h_j$ be the $j$th columns of $\share{R}_{1-i}$ and $\share{\mathcal{H}}_{1-i}$, respectively.
    By the Leftover Hash Lemma~\cite{Impagliazzo1989-fn}, $(B, B r_j \bmod q) \sind (B, h_j)$ holds, provided that $t \log_2 3 \ge n \log_2 q + 2(\lambda + \log_2 d_3)$.
    Applying this to all $d_3$ columns and using a union bound yields $(B, B \share{R}_{1-i} \bmod q) \sind (B, \share{\mathcal{H}}_{1-i})$.
    Adding the fixed term $A \share{\bar{Y}}_{1-i}$ preserves the statistical distance modulo $q$, and therefore $(A, B, \share{H}_{1-i}) \sind (A, B, \share{\mathcal{H}}_{1-i})$.
    
    Combining the above replacements, we obtain $\View_i(\pp, \allowbreak X, \allowbreak Y) \allowbreak \cind \allowbreak \Sim_i(\pp)$.
    This completes the proof.
\end{proof}

\thmref{thm:security} implies that the view of each computing party can be simulated from the public parameters alone.
Therefore, a semi-honest and non-colluding computing party learns no information about $X$, $Y$, or $Z$ beyond what is revealed by the public parameters.
Note that this security proof does not rely on the SIS assumption because it only requires the hiding property of the SIS commitment.

\section{Application to Encrypted Control}\label{sec:control}

In this section, we present an application of the proposed protocol to encrypted control.
To this end, consider a linear control law of the form
\begin{equation}
    u(\tau) = K ( x(\tau) - v(\tau) ),
    \label{eq:control}
\end{equation}
where $\tau = 0, 1, 2, \dots$ is the time index, $u(\tau) \in \R^{n_u}$ is the control input, $x(\tau) \in \R^{n_x}$ is the system state, $v(\tau) \in \R^{n_u}$ is the reference input, and $K \in \R^{n_u \times n_x}$ is the feedback gain.
We assume a scenario where a client wishes to outsource the computation of \eqref{eq:control} to two untrusted parties while concealing $u(\tau)$, $x(\tau)$, $v(\tau)$, and $K$ from them.
In this setting, the feedback gain $K$ is designed by the client, and the reference input $v(\tau)$ is provided by an external operator, while the state $x(\tau)$ originates from a system owned by the client.

Let
\begin{equation}
    \tilde{u}(\tau) \coloneqq \tilde{K} ( \tilde{x}(\tau) - \tilde{v}(\tau) )
    \label{eq:encoded-control}
\end{equation}
be a quantized control input, where $\tilde{K} \in \QQ{k}{\ell}^{n_u \times n_x}$, $\tilde{x}(\tau) \in \QQ{k}{\ell}^{n_x}$, and $\tilde{v}(\tau) \in \QQ{k}{\ell}^{n_u}$ are the fixed-point number representations of $K$, $x(\tau)$, and $v(\tau)$, respectively.
These representations can be obtained using a typical mid-tread quantizer with a step size of $2^{-\ell}$ and a saturation threshold of $2^{k - 1}$.

To apply the $\Mult$ protocol to the controller computation \eqref{eq:encoded-control}, we assign $\tilde{K}$ as $X$ in $\Mult.\Offline$ and slightly modify $\Mult.\Online$.
The modified $\Mult.\Online$ takes $(\crs, C, C', \share{S}, \share{\bar{x}(\tau)}, \share{\bar{v}(\tau)})$ as input and computes $\share{\bar{Y}} = \share{\bar{x}(\tau)} - \share{\bar{v}(\tau)}$, where $\bar{x}(\tau) = 2^\ell \tilde{x}(\tau)$, $\bar{v}(\tau) = 2^\ell \tilde{v}(\tau)$, and $C$ is the LWE ciphertext of $\bar{K} = 2^\ell \tilde{K}$.
At each time step $\tau$, the client and the operator simultaneously send the shares $\share{\bar{x}(\tau)}_i$ and $\share{\bar{v}(\tau)}_i$ to each party $P_i$, respectively.
The parties compute the shares $\share{\bar{Z}}$ according to the modified protocol and return them to the client.
The client then recovers an approximate control input
\begin{equation}
    \hat{u}(\tau) \coloneqq 2^{-2 \ell} \Reconst(\share{\bar{Z}})
    \label{eq:approximate-control}
\end{equation}
from the shares.

\begin{table*}
    \centering
    \caption{Comparison of computational complexity, communication complexity, and required storage size.}
    \label{tab:comparison}
    \begin{tabular}{ccccccc}
        \toprule
        \multirow{2}{*}{Scheme} & \multicolumn{2}{c}{Computational complexity} & \multicolumn{3}{c}{Communication complexity} & Storage size \\
        & Client & Parties & Client $\to$ Parties & Parties $\to$ Client & Party $\leftrightarrow$ Party (1 round) & Parties \\
        \midrule
        Proposed & $O(n_u + n_x)$ & $O(n^2 \log_2 q)$ & $O(n_x \log_2 q)$ & $O(n_u \log_2 q)$ & $O(n \log_2 q)$ & $O((n \log_2 q)^2)$ \\
        \cite{Teranishi2025-wv} & $O(n_u n_x)$ & $O(n_u n_x)$ & $O(n_u n_x \log_2 q)$ & $O(n_u \log_2 q)$ & $O(n_u n_x \log_2 q)$ & $O(n_u n_x \log_2 q)$ \\
        \cite{Kim2023-pk} & $O((n_u + n_x) n)$ & $O(n_u n_x \zeta n^2)$ & $O(n_x n \log_2 q)$ & $O(n_u n \log_2 q)$ & -- & $O(n_u n_x \zeta n^2 \log_2 q)$ \\
        \bottomrule
    \end{tabular}
    \vspace{-1.5\baselineskip}
\end{table*}

\begin{remark}
    \tabref{tab:comparison} compares the online complexities and the required storage size of the proposed scheme with those of the client-aided two-party computation-based scheme~\cite{Teranishi2025-wv} and the Gentry-Sahai-Waters homomorphic encryption-based scheme~\cite{Kim2023-pk}, where $n_u, n_x \ll n$, $t = 2n \log_2 q$, and $\zeta$ is a positive integer.
    The proposed scheme limits the client's computational complexity to $O(n_u + n_x)$, which is lower than the $O(n_u n_x)$ required for the direct evaluation of \eqref{eq:control}, and reduces client-to-party communication.
    These advantages come at the cost of increased complexities and storage size for the computing parties.
    However, to the best of our knowledge, this is the first realization of encrypted control that simultaneously achieves reduced online complexity for the client and confidentiality of both signals and gains in one-round two-party computation.
\end{remark}

To provide a proof of concept, we consider the centralized formation control of $50$ agents.
Each agent is modeled as a discrete-time linear time-invariant system $x_i(\tau + 1) = x_i(\tau) + 0.01 u_i(\tau) + \xi_i(\tau)$, where $x_i(\tau), u_i(\tau), \xi_i(\tau) \in \R^2$ denote the position, input, and noise of the $i$th agent, respectively.
The noise samples are drawn i.i.d. from a zero-mean Gaussian distribution with variance $0.01 I$.
Here, $u(\tau)$, $x(\tau)$, and $v(\tau)$ in \eqref{eq:control} are given as the stacked vectors of $u_i(\tau)$, $x_i(\tau)$, and $v_i(\tau)$, respectively, where $v_i(\tau) = x_i^\mathrm{des} \in \R^2$ denotes the position of the $i$th agent in the target formation.

We design the feedback gain $K \in \R^{100 \times 100}$ in \eqref{eq:control} by solving a linear quadratic regulator problem to minimize the cost function $\lim_{N \to \infty} (1 / N) \sum_{\tau=0}^{N-1} \mathbb{E} [ \sum_{i=1}^{50} \norm{ \eta_i(\tau) }^2 + \sum_{i=1}^{50} \norm{ u_i(\tau) }^2 + \sum_{i < j} \norm{ \eta_i(\tau) - \eta_j(\tau) }^2 ]$, where $\eta_i(\tau) = x_i(\tau) - x_i^\mathrm{des}$.
To achieve a security level of $\lambda = 128$ bits, we choose the LWE parameters $(n, q, \sigma) = (2^{12}, 2^{108}, 3.2)$ and the standard SIS parameter $t = 2n \log_2 q$~\cite{Ajtai1996-hm}, which satisfy \eqref{eq:t-bound}.
For these parameters, \lemref{lem:k-bound} and \thmref{thm:l-bound} guarantee that if the bit lengths $k$ and $\ell$ are chosen such that $42.3 < \ell < k < 50.6$, the approximation error $\norm*{\tilde{u}(\tau) - \hat{u}(\tau)}_{\max}$ remains below $\epsilon = 2^{-10}$.
Accordingly, we set $k = 50$ and $\ell = 43$.

\figref{fig:example}(\subref{fig:formation}) shows the agent trajectories under the control \eqref{eq:control} for $\tau = 0, \dots, 100$ from a grid to a circular formation.
\figref{fig:example}(\subref{fig:error}) then shows the control input errors.
The blue solid line depicts $\| \tilde{u}(\tau) - \hat{u}(\tau) \|_{\max}$, while the red dashed line depicts $\| u(\tau) - \hat{u}(\tau) \|_{\max}$, where $u(\tau)$, $\tilde{u}(\tau)$, and $\hat{u}(\tau)$ are computed using \eqref{eq:control}, \eqref{eq:encoded-control}, and \eqref{eq:approximate-control}, respectively.
The behavior of the blue solid line supports \thmref{thm:l-bound}, as the error remains below the theoretical bound $\epsilon$.
Furthermore, the red dashed line indicates that the encrypted control preserves the original control performance, as the error remains sufficiently small despite the presence of quantization errors.

\begin{figure}
    \vspace{-5mm}
    \centering
    \begin{subfigure}{0.49\columnwidth}
        \centering
        \includegraphics[scale=0.62]{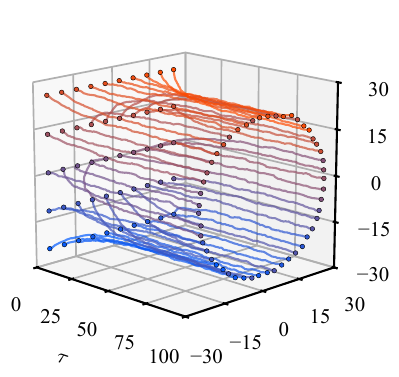}
        \caption{}
        \label{fig:formation}
    \end{subfigure}
    \hfill
    \begin{subfigure}{0.49\columnwidth}
        \centering
        \includegraphics[scale=0.9]{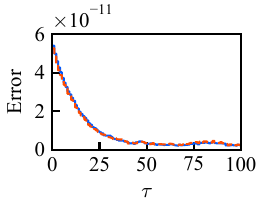}
        \caption{}
        \label{fig:error}
    \end{subfigure}
    \caption{(\subref{fig:formation}) Trajectories of the agents in the original formation control. (\subref{fig:error}) Control input errors $\| \tilde{u}(\tau) - \hat{u}(\tau) \|_{\max}$ (blue solid line) and $\| u(\tau) - \hat{u}(\tau) \|_{\max}$ (red dashed line).}
    \label{fig:example}
    \vspace{-1.5\baselineskip}
\end{figure}

\section{Conclusion}\label{sec:conclusion}

We proposed a one-round two-party computation protocol for the approximate matrix multiplication of fixed-point numbers based on the LWE and SIS problems.
We also established sufficient conditions for fixed-point number precision to guarantee the desired approximation error bound.
When applied to linear control laws, the protocol achieved lower online computational complexity for the client than the original computation, thereby mitigating the client burden of encrypted controls.
Future work will extend this framework to dynamic controllers and optimize the computational costs for the computing parties.

\section*{Acknowledgements}

The author used Gemini 3~\cite{Gemini} throughout the manuscript to improve its grammar and clarity.

\bibliographystyle{IEEEtran}
\bibliography{reference}

\end{document}